\newtheorem{theorem}{Theorem}[section]
\newtheorem{lem}[theorem]{Lemma}
\newtheorem{prop}[theorem]{Proposition}
\newtheorem{cor}[theorem]{Corollary}
\theoremstyle{definition}
\newtheorem{definition}[theorem]{Definition}
\theoremstyle{remark}
\numberwithin{equation}{section}
\newcommand{\Prob}{\text{Prob}}
\newcommand{\tPr}{\tilde{\Pr}}
\newcommand{\A}{\mathcal{V}}
\newcommand{\Ss}{\mathcal{S}}
\newcommand{\half}{\frac 12}
\newcommand{\K}{\mathfrak{K}}
\newcommand{\VT}{\mathsf{V}}
\newcommand{\X}{\mathcal{X}}
\begin{document}

\title{Clustering and expected Seat-Share for district maps}

\author {Kristopher Tapp}
\email{ktapp@sju.edu}
\begin{abstract}
In the context of modern sampling methods for redistricting, we define a natural measurement of the \emph{clustering} of a political party, and we study how clustering affects the expected election outcome.  We first prove general results and then apply this framework to understand how the political geography in Pennsylvania affects the expected outcome of congressional elections.
\end{abstract}
\maketitle

\section{Introduction}
The mathematical literature on redistricting and detecting gerrymandering has recently begun to coalesce around sampling methods.  The basic idea is to generate an ensemble of thousands or millions of randomly generated redistricting plans, and the underlying fairness principle is that an enacted plan should not be an outlier.  Among other things, this means that one political party's seat-share should not differ by too many standard deviations from its average seat-share among the ensemble.

Sampling methods have been applied (and have formed the basis of legal briefs) for Wisconsin~\cite{DUKE}, North Carolina~\cite{Gupta,DUKE2}, Pennsylvania~\cite{Duch} and other states.  The political geography of most of these states (which means the manner in which Democrat and Republican voters are spatially distributed across the state, as snapshotted by the vote tallies from a particular race) results in a structural advantage for Republicans.  The sampling methods allow this structural advantage to be separated from the advantage caused by an allegedly biased map.

This structural advantage is typically attributed to the way that Democrat voters are packed into dense cities while Republican voters are more evenly distributed across rural areas (see for example~\cite{Dontblamemaps}), but this is a bit vague and speculative.  The sampling methods themselves do not illuminate the reason for structural bias.  Other possible causes include the spacial distribution of non-voters and the state's redistricting criteria.  Precisely formulating the claim that ``Democrats are disadvantaged because they are packed into cities'' requires sampling methods that are much newer than the claim itself.

Is a political party advantaged or disadvantaged (with respect to its seat-share averaged over an ensemble) by being more clustered?  The authors of~\cite{Davis} studied this question for a toy model state with $25$ voters arranged in a $5$-by-$5$ grid to be partitioned into five contiguous districts.  Their ensemble contained \emph{all} 4006 possible district plans, and they defined ``clustering'' as the portion of voter-pairs who voted for the same party (among all rook-adjacent voter-pairs).  Among other results, they found a positive correlation between a minority party's clustering and its expected seat share.

New methods are needed to understand larger grids and/or real state data.  One purpose of this paper is to propose a clustering measurement that we call \emph{statistical clustering} because it depends, not just on the graph representing where each party's voters live, but also on the ensemble of maps.  The advantage of this dependence is that we are able to prove very general results relating a party's statistical clustering to its expected seat-share

The remainder of this paper is organized as follows.  In Section 2, we describe a general setup for modelling a state with two political parties under the simplifying assumption that everyone votes.  In Section 3, we define the \emph{statistical clustering} of a political party.  As with any reasonable measurement of clustering, this measurement is lowest when the party's voters are distributed homogeneously across the state, and is highest when they are all segregated into one geographically isolated region of the state.  We prove bounds on a party's average seat-share in terms of its clustering.  With very low clustering, our bounds virtually guarantee that a majority party will win all of the seats.  With very high clustering, our bounds guarantee an outcome close to proportionality.  These two limit results (for minimal and maximal clustering) are unsurprising.  What is new and perhaps useful for other applications is the abstract framework for addressing (in a general setting, not specific to any particular state or sampling method) the key question of how political geography affects the expected outcome of an election.

In Section 4, we apply these ideas to study Pennsylvania Congressional maps.  Single number measurements like statistical clustering don't seem to provide very strong bounds for an actual state.  This is because a party's performance has more to do with \emph{how} it's clustered than \emph{how much} its clustered.  But a key Lemma from Section 3 does make possible some precise statements about how the Democrat's performance relates to their concentration in cities.  With respect to 2016 Presidential voting data (in which the Democrat two-party vote share was about $50\%$), and with respect to a specific choice of ensemble (described in Section 4), the Democrat's expected seat share is $37\%$, which breaks down as:
$$38\% = 29\% \text{(\emph{from Philadelphia})} + 7\% \text{(\emph{from Pittsburgh})}+1\% \text{(\emph{from all else})}.$$
It is perhaps not obvious that it makes sense to additively separate the contributions from different cities like this; the framework that makes this possible a discussed in Section 4.

Finally, the appendix contains a new general inequality for random variables.  Our bounds in Section 3 are based on this inequality, so readers interested in the proof details should begin with the appendix.

\section*{Acknowledgments} The author is pleased to thank Paul Klingsberg for valuable feedback.

\section{setup}
It is common to use a graph $G$ to model a state that must be divided into $k$ districts.  The vertices of $G$ represent the smallest units of population out of which districts are to be formed (for example, voter tabulation districts, precincts or wards).  Let $\VT$ denote the set of vertices of $G$.  Two vertices are connected by an edge if the corresponding pair of geographical regions share a boundary of non-zero length.

The vertices are weighted by a population function $p$; more precisely, if $v\in \VT$, then $p(v)\in(0,1)$ represents the fraction of the state's population that resides in $v$.  We assume that everyone votes for either party $A$ or party $B$, so the population of each vertex $v\in \VT$ splits up correspondingly as:
$$p(v) = p_A(v) + p_B(v),$$
where $p_A(v)$ (respectively $p_B(v)$) denotes the fraction of the state's population that both resides in $v$ and votes for party $A$ (respectively party $B$).
The functions $\{p,p_A,p_B\}$ on $\VT$ can be applied to any subset $C\subset \VT$ in the obvious additive way; for example, $p(C)=\sum_{v\in C}p(v)$, and similarly for $p_A$ and $p_B$.

A \emph{district} means a subset $D\subset \VT$ such that $p(D) = 1/k$ (this is a minor simplification, since typically the district population is only required to \emph{approximately} equal $1/k$).  A \emph{map} means a partition of $\VT$ into $k$ districts.

Our starting point is a probability function denoted ``$\Pr$'' on the space of maps.  In this article, we allow $\Pr$ to be arbitrary, but in practice one would have $\Pr(M)=0$ for any map $M$ that violates any of the state-specific redistricting rules like district contiguity, district compactness, and compliance with the Voters Rights Act.  For example, $\Pr$ might be uniform among compliant maps, or might be weighted against barely-compliant maps.  We refer to~\cite[page 7]{DUKE2} for an overview of the varying methods that have been used to create ensembles of maps, including constructive randomized algorithms, optimization algorithms, and MCMC algorithms.  In principle, any ensemble-generation method induces probability function on the space of maps, but the MCMC approach matches best with the viewpoint of this paper because it generates an ensemble that can theoretically be proven to be a random sample from the space of all possible maps with respect to an explicitly described probability function.

Note that $\Pr$ induces a probability function $\tilde{\Pr}$ on the space of districts, defined such that for any district $D$,
$$\tPr(D) = \frac{1}{k}\cdot\sum_{M\ni D}\Pr(M),$$
where the sum is over all maps $M$ that contain $D$ as one of their districts.  Notice that choosing a random district is equivalent to the following two-step process: first choose a random map using the $\Pr$ function, then choose one of its $k$ districts uniformly at random.

\section{Statistical Clustering}

In this section, we propose a natural way to measure the clustering of the members the political parties.  For this, first consider the \emph{vote-share} random variable, $\A$, defined such that for each district $D$,
\begin{equation}\label{E:Vv}\A(D) = \frac{p_A(D)}{p(D)} = k\cdot p_A(D),\end{equation}
which is just the fraction of the district that is loyal to party $A$.  Notice that $\A$ is a random variable with respect to $\tPr$.  Denote $\mu_{\A}=E(\A)$ and $\sigma^2_{\A}=\text{Var}(\A)$.  The following lemma says that $\mu_{\A}$ equals Party $A$'s statewide vote-share.

\begin{lem}\label{L:count} $\mu_{\A} = p_A(\VT)$.
\end{lem}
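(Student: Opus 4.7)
The plan is a direct computation: expand $\mu_{\A} = E(\A)$ using the definitions of $\A$ and of the distribution of $\tilde\Gamma$, swap the order of summation, and then use the fact that the districts of any map $M$ partition $\VT$.

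Concretely, I would start by writing
\[
\mu_{\A} = \sum_D \A(D)\,\Prob(\tilde\Gamma = D) = \sum_D k\,p_A(D)\cdot\frac{1}{k}\sum_{M\ni D}\Prob(\Gamma=M),
\]
where the outer sum is over all districts $D$. The factors of $k$ cancel, and I would then reverse the order of summation to sum over maps $M$ first, with the inner sum running over the districts $D$ of $M$:
\[
\mu_{\A} = \sum_M \Prob(\Gamma=M)\sum_{D\in M} p_A(D).
\]
Since the districts of $M$ partition $\VT$, the inner sum collapses to $p_A(\VT)$ by the additivity convention introduced for $p_A$. Pulling this constant out of the outer sum and using $\sum_M \Prob(\Gamma=M) = 1$ gives $\mu_{\A} = p_A(\VT)$.

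There is no real obstacle here; the only subtle step is the swap of summations, which is justified because the double sum $\sum_D\sum_{M\ni D}$ ranges over exactly the same pairs $(D,M)$ with $D\in M$ as $\sum_M\sum_{D\in M}$. Everything else is bookkeeping and the definitional cancellation $k\cdot(1/k)=1$.
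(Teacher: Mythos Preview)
Your proposal is correct and follows essentially the same computation as the paper's proof: expand using the definition of $\tilde\Gamma$, swap the double sum over pairs $(D,M)$ with $D\in M$, and use that the districts of each map partition $\VT$. The only cosmetic difference is that you cancel the factor $k$ before the swap (via $\A(D)=k\,p_A(D)$), whereas the paper carries the $\frac{1}{k}$ through the swap and then uses $\frac{1}{k}\A(D)=p_A(D)$.
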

This conclusion is very natural.  For example, if $35\%$ of the state's population is loyal to party $A$, the lemma says that on average $35\%$ of a random district is loyal to party $A$.
\begin{proof}
\begin{align*}
\mu_{\A}  & = \sum_D\tPr(D)\cdot \A(D) = \sum_D\left(\frac 1k \sum_{M\ni D}\Pr(M)\right)\cdot \A(D) \\
       & = \sum_M\Pr(M)\cdot\left(\frac 1k\sum_{D\in M} \A(D)\right)
         = \sum_M\Pr(M)\cdot\left(\sum_{D\in M} p_A(D)\right)\\
         & = \sum_M\Pr(M)\cdot p_A(\VT) = 1\cdot p_A(\VT).
\end{align*}
\end{proof}
\begin{definition} The \emph{statistical clustering} is:
$\K = \frac{\sigma_\A^2}{\mu_\A(1-\mu_\A)}.$
\end{definition}
The value $\K$ is defined here with respect to Party $A$, but it is straightforward to verify that the measurement would be unchanged if the roles of $A$ and $B$ were swapped throughout. In other words, the statistical clustering of Party $A$ equals the statistical clustering of Party $B$, which is why the non-party-specific terminology is appropriate.

According to the appendix, $\K\in[0,1]$ measures the variance of $\A$ relative to the maximal variance possible for the given value of $\mu_\A$, and this maximum occurs when the support of $\A$ is $\{0,1\}$.  Values of $\K$ close to zero occur in the homogeneous situation when all vertices (and hence all districts) have about the same fraction of party $A$ voters.  On the other hand, values of $\K$ close to $1$ occur when a randomly chosen district is very likely to either be filled entirely with party $A$ voters or contain no party $A$ voters.  In natural applications, we expect this to happen when party $A$ voters are segregated into a geographically isolated region of the state.

The term ``geographically isolated'' requires qualification here.  Our definition of $\K$ depends not only on the graph $G$ but also on the probability function $\Pr$.  Because of this dependence, ``clustering'' is not the right word for what $\K$ measures in certain unnatural mathematically-contrived examples, such as when $\Pr$ is uniform among all maps with no regard for contiguity or other geographical redistricting criteria.  But in natural applications, such as when MCMC methods are applied to a particular state with a reasonably large number of districts, we propose that ``clustering'' is exactly what $\K$ measures because of the way that $\Pr$ encodes the geography of the state.  Furthermore, unlike graph-theoretic definitions for clustering that depend only on $\{G,p_A,p_B\}$, our statistics-based measurement $\K$ can be easily related to the expectation for the outcome of an election, which is the goal of the remainder of this section.

To this end, we consider another random variable.  For any fixed map $M$, let $\Ss(M)$ denote the \emph{seat-share} of party $A$, which is defined as $\frac 1k$ times the number of districts $D$ of $M$ in which $p_A(D)>p_B(D)$.  Notice that $\Ss$ is a random variable with respect to $\Pr$.  Let $\mu_{\Ss}$ denote its expected value, which we will now prove is equal to the probability that party A wins a randomly chosen district:
\begin{lem}\label{L:Seatshare}
$\mu_{\Ss} = \Prob\left(\A>\half\right)$.
\end{lem}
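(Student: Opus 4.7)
The plan is to mirror the double-summation swap used in the proof of Lemma \ref{L:count}. The key observation is that the indicator variable $\mathbf{1}[\A(D)>\half]$ plays for seat-share exactly the role that $\A(D)$ (equivalently $k\cdot p_A(D)$) played for vote-share: in both cases $\Ss(M)$ and $p_A(\VT)$ are expressible as an average over the $k$ districts of $M$ of some district-level quantity. So the same manipulation should go through verbatim.

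Concretely, I would first rewrite
\[
\Ss(M) \;=\; \frac{1}{k}\sum_{D\in M}\mathbf{1}[\A(D)>\tfrac12],
\]
and then expand
\[
\mu_\Ss \;=\; \sum_M \Prob(\Gamma=M)\cdot\frac{1}{k}\sum_{D\in M}\mathbf{1}[\A(D)>\tfrac12].
\]
Next I would swap the order of summation, pulling the sum over districts to the outside and grouping together the maps $M$ containing each fixed district $D$:
\[
\mu_\Ss \;=\; \sum_D \mathbf{1}[\A(D)>\tfrac12]\cdot\frac{1}{k}\sum_{M\ni D}\Prob(\Gamma=M).
\]
By the definition of $\tilde\Gamma$, the inner expression equals $\Prob(\tilde\Gamma=D)$, so
\[
\mu_\Ss \;=\; \sum_D \Prob(\tilde\Gamma=D)\cdot\mathbf{1}[\A(D)>\tfrac12] \;=\; \Prob(\A>\tfrac12),
\]
as claimed.

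There is no real obstacle here: the lemma is essentially a restatement of the definition of $\tilde\Gamma$, packaged so that a district-level probability reflects a map-level expectation. The only subtle point worth flagging is that the swap of summations relies on each district $D$ being contained in well-defined maps $M$ (with the right normalization factor of $1/k$ built into $\tilde\Gamma$'s definition), which is exactly the content of the two-step sampling description at the end of Section 2.
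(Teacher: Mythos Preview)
Your proof is correct and is essentially the same as the paper's: both introduce the indicator $\mathbf{1}[\A(D)>\tfrac12]$ (the paper calls it $\delta(D)$) and swap the order of the double sum over maps and districts using the definition of $\tilde\Gamma$. The only cosmetic difference is direction---you start from $\mu_\Ss$ and arrive at $\Prob(\A>\tfrac12)$, while the paper runs the chain of equalities the other way.
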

\begin{proof}
For any district $D$, define
$\delta(D)=\begin{cases}1 & \text{if }p_A(D)>p_B(D) \\ 0 & \text{otherwise}  \end{cases}.$
\newline Note that $p_A(D)>p_B(D)$ if and only if $\A(D)>\frac 12$, so:
\begin{align*}
\Prob\left(\A>\half\right)  & = \sum_D\tPr(D)\cdot \delta(D) = \sum_D\left(\frac 1k \sum_{M\ni D}\Pr(M)\right)\cdot \delta(D) \\
       & = \sum_M\Pr(M)\cdot\left(\frac 1k\sum_{D\in M} \delta(D)\right) \\
       &  = \sum_M\Pr(M)\cdot \Ss(M) = \mu_{\Ss}.
\end{align*}
\end{proof}

Combining Lemma~\ref{L:Seatshare} with the $\theta=\half$ case of Corollary~\ref{C:BD} from the appendix yields the main result of this section:

\begin{prop}\label{P:A}\hspace{1in}\begin{enumerate}
\item $\mu_\Ss \geq \left(2-2\K\right)\mu_\A^2 - \left(1-2\K\right)\mu_\A$.
\item If $\mu_\A>\half$, then $\mu_\Ss \geq 1-\K\cdot \frac{\mu_\A(1-\mu_\A)}{\left(\mu_\A-\half\right)^2}$.
\end{enumerate}
\end{prop}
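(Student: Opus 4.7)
The key idea is to use Lemma~\ref{L:Seatshare} to translate the statement about seat-share into one about probability, $\mu_\Ss = \Prob(\A > 1/2)$, and then bound $\Prob(\A > 1/2)$ from below in two different ways using only the mean $\mu_\A$, the variance $\sigma_\A^2 = \K_\A\mu_\A(1-\mu_\A)$, and the fact that $\A$ takes values in $[0,1]$. These two lower bounds are exactly what Corollary~\ref{C:BD} provides when specialized to $\theta=1/2$, so the real work is tucked into the appendix and the proof of the proposition is mostly bookkeeping.

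For part (1) my strategy is to use the pointwise quadratic inequality $\mathbf{1}_{x>1/2}\geq 2x^2-x$ valid on $[0,1]$. This is easy to verify in cases: on $[0,1/2]$ we have $x(2x-1)\leq 0$, while on $(1/2,1]$ we have $1-(2x^2-x)=(1-x)(2x+1)\geq 0$. Plugging in $x=\A$, taking expectations with respect to $\tilde\Gamma$, and using $E[\A^2]=\mu_\A^2+\sigma_\A^2$ yields $\Prob(\A>1/2)\geq 2\mu_\A^2+2\sigma_\A^2-\mu_\A$. Substituting $\sigma_\A^2=\K_\A\mu_\A(1-\mu_\A)$ and collecting terms in $\mu_\A^2$ and $\mu_\A$ produces the stated formula.

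For part (2) my plan is to invoke a one-sided Chebyshev-type estimate. Under the hypothesis $\mu_\A>1/2$, any realization with $\A\leq 1/2$ satisfies $\mu_\A-\A\geq \mu_\A-1/2>0$, which gives the pointwise bound $\mathbf{1}_{\A\leq 1/2}\leq (\mu_\A-\A)^2/(\mu_\A-1/2)^2$. Taking expectations yields $\Prob(\A\leq 1/2)\leq \sigma_\A^2/(\mu_\A-1/2)^2$; passing to the complement and substituting $\sigma_\A^2=\K_\A\mu_\A(1-\mu_\A)$ gives the claimed bound. The only real obstacle is ensuring that Corollary~\ref{C:BD} is formulated so that both of these bounds come out directly in the $\theta=1/2$ case; once that is in hand, the rest of the proposition is routine algebra. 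Note also that part (1) is valid for any value of $\mu_\A$, whereas part (2) only has content (and its derivation only makes sense) when $\mu_\A>1/2$, which is why both statements are worth recording separately.
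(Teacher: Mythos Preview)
Your proposal is correct and matches the paper's approach: the paper's proof of this proposition is literally the one sentence ``Combining Lemma~\ref{L:Seatshare} with the $\theta=1/2$ case of Corollary~\ref{C:BD} from the Appendix yields \ldots'', which is exactly the skeleton you describe.

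Where you go beyond the paper is in giving self-contained proofs of the two bounds rather than simply citing the appendix. For part~(2) your argument is the same one-line Chebyshev computation that underlies Proposition~\ref{P:BD}(2). For part~(1), your pointwise quadratic minorant $\mathbf{1}_{x>1/2}\ge 2x^2-x$ on $[0,1]$ is a genuinely cleaner route than the paper's proof of Proposition~\ref{P:BD}(1), which proceeds by a variational argument identifying the extremal distribution supported on $\{0,\theta,1\}$. In fact your idea generalizes: the quadratic $q(x)=\dfrac{x(x-\theta)}{1-\theta}$ interpolates the indicator at $0,\theta,1$, satisfies $q(x)\le\mathbf{1}_{x>\theta}$ on $[0,1]$, and taking expectations recovers Proposition~\ref{P:BD}(1) in full. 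So your argument actually furnishes a shorter alternative proof of the appendix result, not merely of its $\theta=1/2$ specialization.
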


Proposition~\ref{P:A} provides a lower bound on Party $A$'s expected seat-share.  If one exchange the roles of $A$ and $B$ throughout, then the proposition instead gives a lower bound on Party $B$'s expected seat-share, which is equivalent to an \emph{upper} bound on Party $A$'s expected seat-share\footnote{This requires the assumption that tied districts are negligibly rare.}.  Thus, after applying the proposition to both parties, one obtain lower and upper bounds on $\mu_\Ss$.

The remainder of this section is devoted to explaining and interpreting Figure~\ref{F:Seatshare}, which visualizes these upper and lower bounds.  To provide context, we first review the following well-known general bound on $\mu_\Ss$.  Party $A$'s seat-share for \emph{any} map $M$ is bounded in terms of party $A$'s vote-share $\mu_\A$ as follows: $2\mu_\A-1\leq \Ss(M)\leq 2\mu_\A$; see~\cite[Lemma 1]{Tapp}.  In particular,
\begin{equation} 2\mu_\A-1\leq \mu_\Ss\leq 2\mu_\A.\label{E:purple}\end{equation}

In Figure~\ref{F:Seatshare}, the black lines represent Equation~\ref{E:purple}, the red curves represent part (1) of Proposition~\ref{P:A} applied to both parties, while the blue curves represent part (2).  The grey shaded regions represent the ``feasible'' points ($\mu_\A,\mu_S)$ for the given values of $\K$, where ``feasible'' means consistent with all of the inequalities under consideration.

\begin{figure}[ht!]\centering
   \scalebox{.20}{\includegraphics{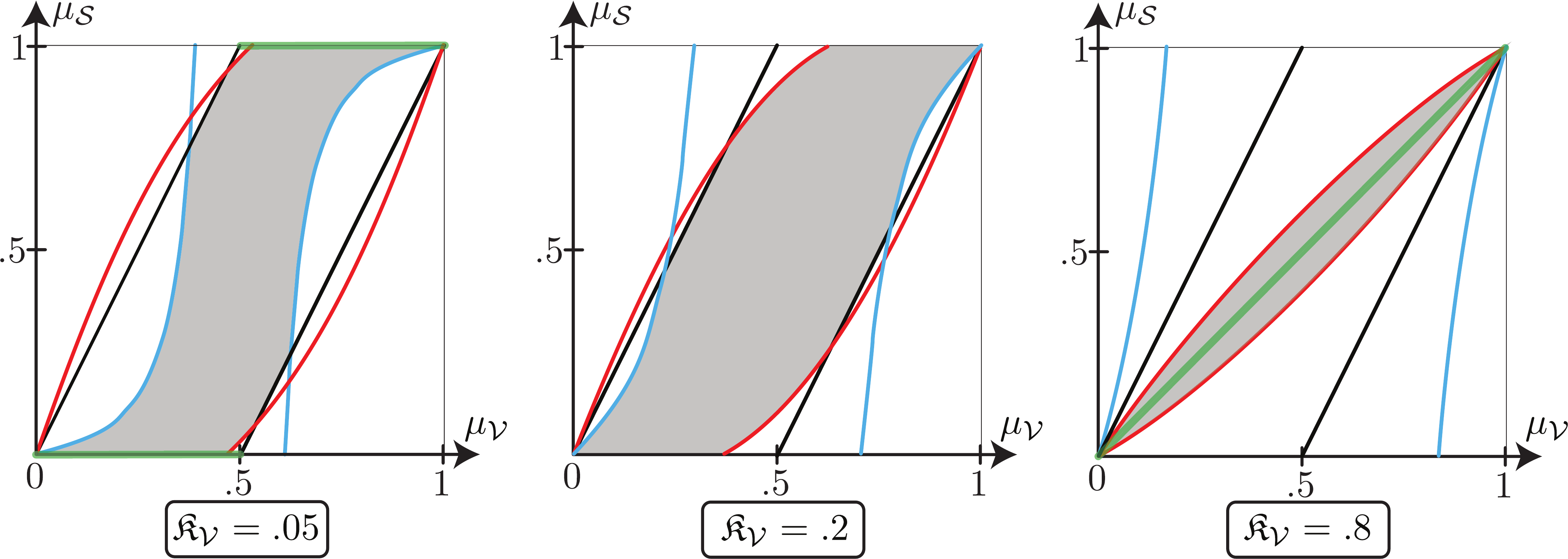}}
\caption{Bounds on Party $A$'s expected seat-share $\mu_\Ss$ as a function of Party $A$'s vote-share $\mu_{\A}$ (for three values of $\K$).}\label{F:Seatshare}
   \end{figure}

For small values of $\K$, the blue (Tchebysheff) curves dominate the story and ensure that a party with a reasonably large majority can expect to win all of the seats.  In fact, in the limit as $\K\rightarrow 0$, the area of the grey region in Figure~\ref{F:Seatshare} shrinks to zero and its shape converges to the ``majority takes all'' curve shown green in the left graph of the figure.

For large values of $\K$, the red curves dominate the story and guarantee an outcome close to proportionality.  In fact, in the limit as $\K\rightarrow 1$, the area of the grey region in Figure~\ref{F:Seatshare} shrinks to zero and its shape converges to the proportionality line $\mu_\Ss = \mu_\A$ shown green in the right graph of the figure.

In summary, a majority party will favor $\K\approx 0$ (winning all the seats), while a minority party will favor $\K\approx 1$ (achieving proportionality).  But in the next section we'll see that when $\K$ isn't close to $0$ or $1$, a party's expected performance depends less on $\K$ and more on the manner in which the party is clustered.

\section{The political geography of Pennsylvania}
In this section, we use the framework developed in the previous section to empirically understand the political geography of Pennsylvania and its effect on the expected outcome of congressional elections.  All empirical results from this section are derived in a Jupyter notebook available at~\cite{Tapp_site}

Single number measurements like statistical clustering only provide weak bounds in Pennsylvania because a party's performance has more to do with \emph{how} it's clustered than \emph{how much} its clustered.  Nevertheless, we will demonstrate that Lemma~\ref{L:Seatshare} is useful tool for making precise statements about how the Democrat party's performance relates to its concentration in the greater Philadelphia and Pittsburgh regions.

All of the results in this section are based on the Pennsylvania data from~\cite{Pa_shape} made available by the \emph{Metric Geometry and Gerrymandering Group}.  Specifically, we use MGGG's precinct shape files and partisan voting information from the 2016 presidential election results.  We let $G$ denote the graph of Pennsylvania as described in the previous sections, which has $|\VT|=9255$ precincts.

We build an ensemble of 10000 maps generated by the Recombination algorithm introduced in~\cite{Recom} and made available as open source code at~\cite{Recom_alg}.  Our initial seed is the Remedial map (the map used in the 2018 congressional election).  Our only constrains are that the districts be equipopulus (up to a $2\%$ error) and compact (in the sense that the number of cut edges is less than $1.5$ times than that of the Remedial map).  In particular, we do not impose any constraints related to the Voters Rights Act or to the number of county splits.

Any method of generating an ensemble induces a probability function $\Pr$ on the set of all maps.   In our case, $\Pr$ is not explicitly understood, since one must overlay the Metropolis-Hasting algrorithm on top of the Recombination algorithm (as described in~\cite{Mergesplit}) in order for the ensemble to be a sample from an explicitly understood probability function like the uniform distribution on the set of maps satisfying the constraints.  In other words, we're choosing a simple fast algorithm at the cost of settling for a probability function that's not explicitly described.

From each of the 10000 maps in the ensemble, we uniformly randomly choose one of its $18$ congressional districts, so we have an ensemble of $10000$ \emph{districts}, which induces a probability function $\tilde{\Pr}$ on the set of all districts.

All averages with respect to $\Pr$ (respectively $\tilde{\Pr}$) will be calculated as sample averages over our ensemble or maps (respectively our ensemble of districts).

Let Party $A$ be the Democratic party and $B$ the Republican party.  Since Pennsylvania has non-voters and third-party voters, the definition of the ``two-party vote-share'' random variable $\A$ in Equation~\ref{E:Vv} must be modified as follows:
$$\A(D) = \frac{p_A(D)}{p_A(D)+p_B(D)}.$$

Lemma~\ref{L:count} does not generalize to the setting with non-voters, but empirically the following values only differ by about one percent:
$$ \mu_\A \approx .487,\qquad \frac{p_A(\VT)}{p_A(\VT)+p_B(\VT)}\approx .497.$$
Even if everyone voted, there would still be small difference between these values due to the use of sample averages.  Thus, the Democrat's statewide two-party vote share was just under $.5$ and was well-approximated by $\mu_\A$.

Lemma~\ref{L:Seatshare} remains valid even in the presence of non-voters, with essentially the same proof, so
\begin{equation}\label{E:peter}\mu_{\Ss} = \Prob\left(\A>\half\right),\end{equation}
with the ``seat-share'' random variable $\Ss$ defined as in the previous section.

The statistical clustering is:
$$\K = \frac{\sigma_\A^2}{\mu_\A(1-\mu_\A)}\approx \frac{(.16)^2}{.49(1-.49)}\approx .12.$$
Proposition~\ref{P:A} remains valid even in the presence of non-voters, but yields only the following weak bound:
$$\mu_\Ss \geq \left(2-2\K\right)\mu_\A^2 - \left(1-2\K\right)\mu_\A\approx .05.$$
Thus, Proposition~\ref{P:A} only guarantees the Democrats an expected seat-share of at least $5\%$.

The actual value is: $\mu_\Ss \approx .37,$ which according to Equation~\ref{E:peter} can be visualized as the area to the right of $.5$ in Figure~\ref{F:D_Density}

\begin{figure}[ht!]\centering
   \scalebox{.60}{\includegraphics{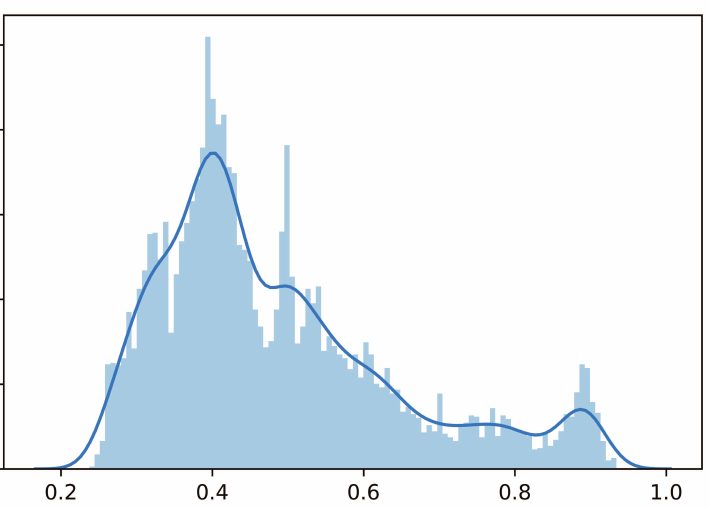}}
\caption{A density plot for $\A$.}\label{F:D_Density}
   \end{figure}

Even though the statistical clustering failed to account for the value $\mu_\Ss\approx .37$, in the remainder of this section we demonstrate that Equation~\ref{E:peter} allows us to make precise claims about how the Democrat's concentration in Philadelphia and Pittsburgh almost fully accounts for this value.

For this, we must first identify the Democrat strongholds in the Philadelphia and Pittsburgh regions.  First define
\begin{equation}\label{E:VAdef} \VT_A = \left\{v\in\VT\mid \frac{p_A(v)}{p_A(v)+p_B(v)}>.55\right\},\end{equation}
which is the set of precincts where the Democrats have at least $55\%$ of the two-party vote. The subgraph of $\VT$ induced by $\VT_A$ has $124$ connected components.  Let $\VT_1$ and $\VT_2$ denote the largest two of these components (ranked by total voting population).  For $i\in\{1,2\}$, let $\overline{\VT}_i\subset\VT$ denote the set of vertices that intersect the convex hull of $\VT_i$.  The boundaries of $\overline{\VT}_1$ and $\overline{\VT}_2$ are shown in Figure~\ref{F:Pit_Phl} superimpose on a choropleth map in with each precinct $v\in\VT$ is colored on a red-to-blue scale according to the value $\frac{p_A(v)}{p_A(v)+p_B(v)}$.  For expository convenience, we will henceforth refer to $\overline{\VT}_1$ and $\overline{\VT}_2$ as \emph{Philadelphia} and \emph{Pittsburgh} respectively, even though they don't match the municipal boundaries.

\begin{figure}[ht!]\centering
   \scalebox{.75}{\includegraphics{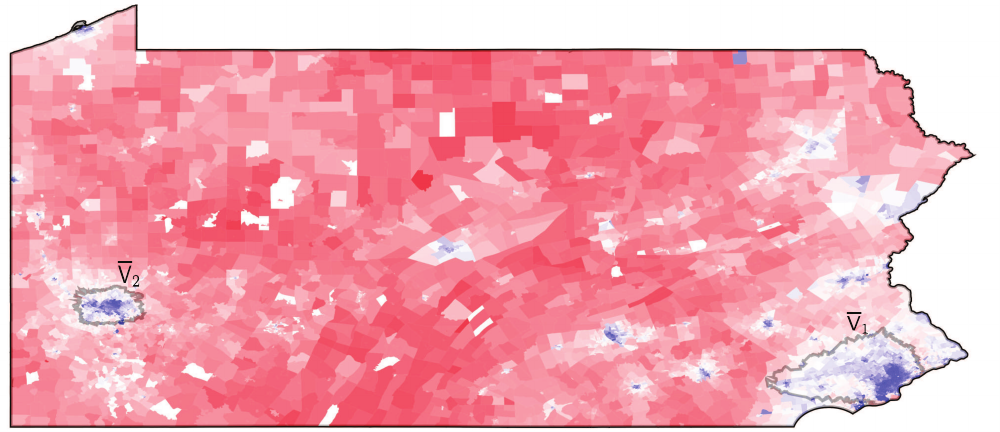}}
\caption{The two largest Democrat strongholds.}\label{F:Pit_Phl}
   \end{figure}

To understand the influence of these two strongholds, we define the following for each $i\in\{1,2\}$ and each district $D$:
$$\X_i(D) = \frac{p(D\cap \overline{\VT}_i)}{p(D)}$$
Notice that $\X_1$ and $\X_2$ are random variables with respect to $\tilde{\Pr}$ that encode the amount of overlap that districts have with $\overline{\VT}_1$ and $\overline{\VT}_2$.

Empirically,
$$\Prob(\X_1>0 \text{ and } \X_2>0) = 0,$$ that is, districts never overlap both Pittsburgh and Philadelphia.  Furthermore,
$$\Prob(\X_1=0 \text{ and } \X_2=0 \text{ and } \A>.5) \approx 0.01,$$
that is, there is only a $1\%$ chance that a randomly chosen district will be won by Democrats but will not overlap Philadelphia or Pittsburgh.  This observation represents a precise sense in which Philadelphia and Pittsburgh are essentially the \emph{only} Democratic strongholds that influence the party's expected election performance.

Equation~\ref{E:peter} allows us to additively separate the influences of Philadelphia and Pittsburgh, as follows:
\begin{align}
\mu_{\Ss}
   & = \Prob\left(\A>\half\right) \notag\\
   & = \Prob(\X_1=0 \text{ and } \X_2=0 \text{ and } \A>.5)\notag\\
   &  \qquad + \Prob(\X_1>0)\cdot\Prob\left(\A>\half\mid \X_1>0\right) \notag\\
   & \qquad + \Prob(\X_2>0)\cdot\Prob\left(\A>\half\mid \X_2>0\right)\notag\\
   & \approx (.01) + (.35)(.83) + (.19)(.37) \label{E:graciegu}\\
   & \approx (.01) + (.29) + (.07) \approx .37\notag
\end{align}
In summary, Democrats expect on average to win $37\%$ of the districts in a random map, and we have additively separated this value into the contributions from Philadelphia ($29\%$), Pittsburgh ($7\%$) and all else ($1\%$).

It is worth asking whether any of the empirical values in Equation~\ref{E:graciegu} (namely $.35$, $.83$, $.19$ and $.37$) can be better understood in terms of simpler information.  In truth all of these values depend on the ensemble (that is they depend on $\tilde{\Pr}$).  But since our ensemble is geometrically reasonable, one might ask whether any of these values could be roughly predicted just from the graph $G$ and the partisan functions $p_A$ and $p_B$.

Not much can be said about the numbers $.35$ and $.19$ (the probability that a randomly chosen district touches Philadelphia and Pittsburgh respectively).  These values are very roughly related to the total populations and placements of these cities (Philadelphia in the corner and Pittsburgh in the interior), but not much else can be said, primarily because the process of choosing a random district isn't very well-approximate by any simple process that forgets the rest of the map.

Next consider the empirical value: $\Prob\left(\A>\half\mid \X_2>0\right) = .37$; that is, the Democrats have a $37\%$ chance of winning a random district that touches Pittsburgh.  Figure~\ref{F:Pit_joint} exhibits the approximately linear relationship between $\X_2$ and $\A$ among the districts in the ensemble with $\X_2>0$.  The regression line could have been predicted fairly accurately because its extreme values are known: its  maximum value of about $.68$ approximately equals the Democrat's two-party vote share within Pittsburgh, while its minimum value of about $.34$ approximately equals the Democrat's two-party vote share in the red ocean surrounding Pittsburgh.

\begin{figure}[ht!]\centering
   \scalebox{.45}{\includegraphics{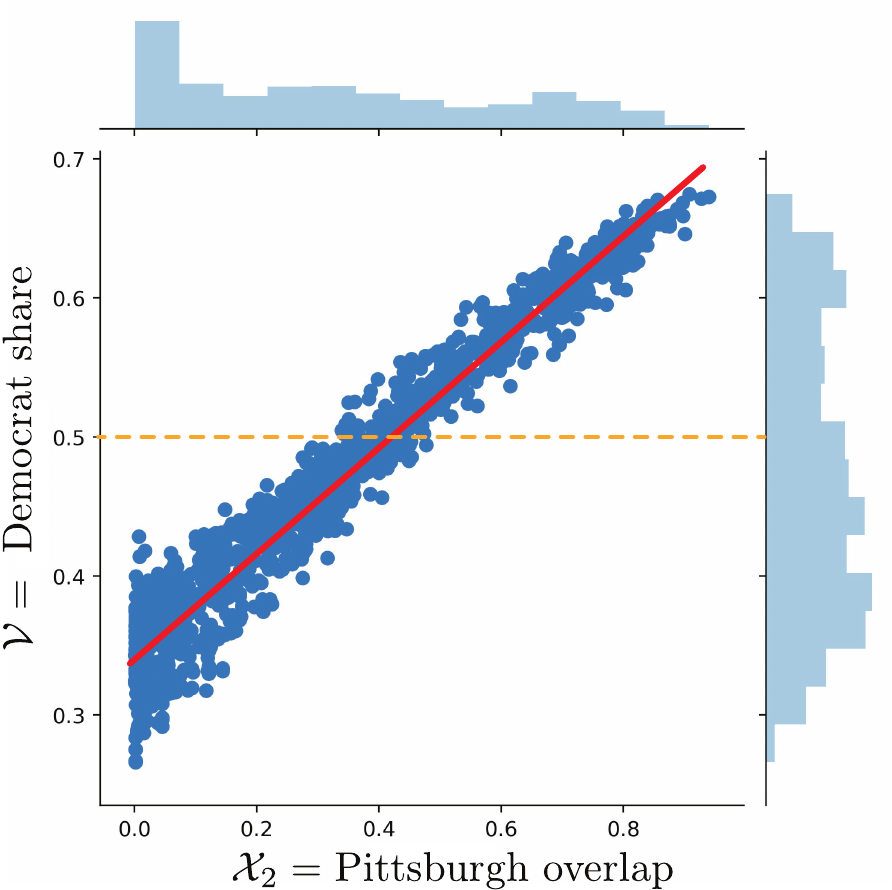}}
\caption{Joint plot for $\X_2$ and $\A$ among districts with $\X_2>0$.}\label{F:Pit_joint}
   \end{figure}

Next consider the empirical value: $\Prob\left(\A>\half\mid \X_1>0\right) = .83$; that is, the Democrats have an $83\%$ chance of winning a random district that touches Philadelphia.  Figure~\ref{F:Phl_joint} shows the correlation between $\X_1$ and $\A$ among the districts in the ensemble with $\X_1>0$.  The relationship is not as linear as in the Pittsburgh case.  The Choropleth of Philadelphia in Figure~\ref{F:Phl_blue} helps accounts for the nonlinearity.  The Democrat two-party vote-share is almost $100\%$ in Center City (near Philadelphia's east border) and decreases roughly linearly as one moves away from Center City.  This linear vote-share gradient would lead one to expect that $\A$ depends quadratically on $\X_1$, which roughly seems to be the case in Figure~\ref{F:Phl_joint}.

In summary, the framework of Section 3 allows us to additively separate the contributions of Philadelphia and Pittsburgh to the Democrat's expected seat-share.  The values of these contributions were computed empirically using the ensemble, but certain aspects of the calculations could have at least roughly been predicted from geometric and partisan information; that is, from $\{G,p_A,p_B\}$.

\begin{figure}[ht!]\centering
   \scalebox{.45}{\includegraphics{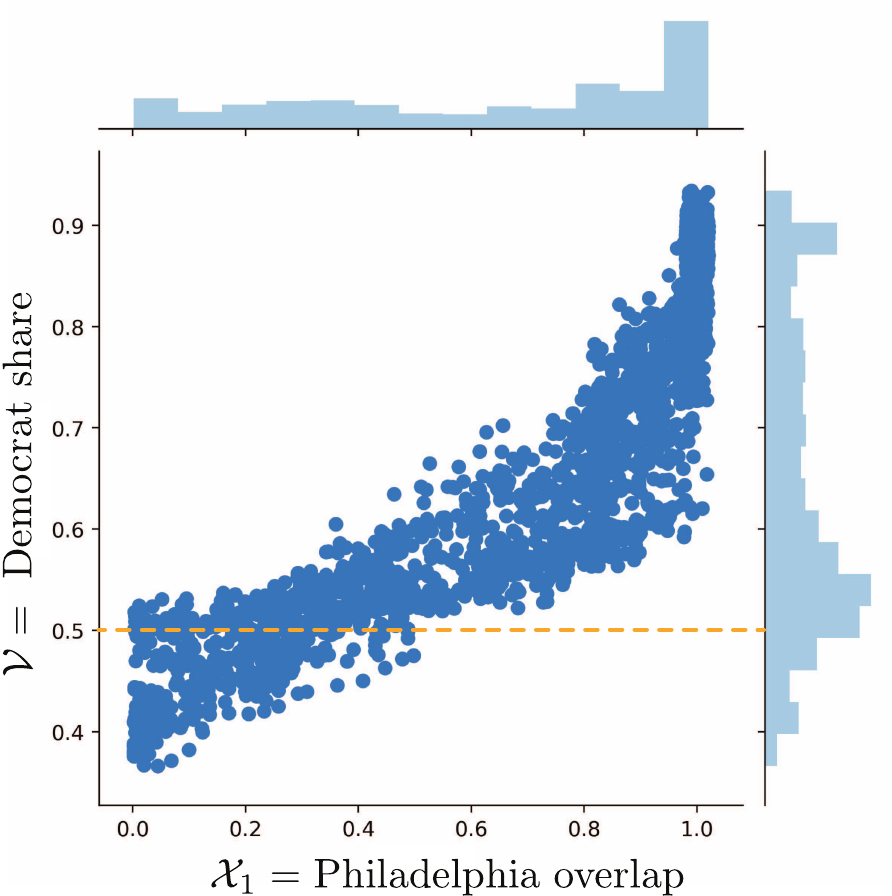}}
\caption{Joint plot for $\X_1$ and $\A$ among districts with $\X_1>0$.}\label{F:Phl_joint}
   \end{figure}

\begin{figure}[ht!]\centering
   \scalebox{.60}{\includegraphics{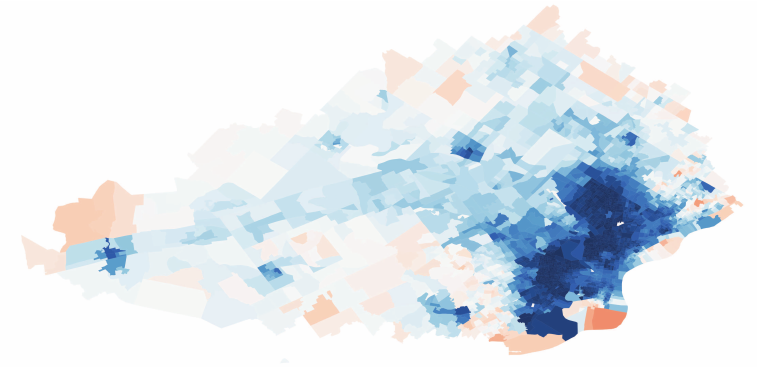}}
\caption{Choropleth of $\VT_1$ (Philadelphia).}\label{F:Phl_blue}
   \end{figure}

\section{APPENDIX: General Variance Bounds}
In this section, we prove a general result for a random variable with support in $[0,1]$.  The main result of Section 3 is based on this bound.

\begin{prop}\label{P:BD} Suppose that $X$ is a random variable with support in $[0,1]$, and $\theta\in(0,1)$ is a constant.  Denoting $\mu=E(X)$ and $\sigma^2=\text{Var}(X)$, we have:
\begin{itemize}
\item[] $\text{\emph{[Bhatia-Davis] }}\sigma^2\leq \mu(1-\mu).$
\end{itemize}
Furthermore:
\begin{enumerate}
\item $\Prob\left(X>\theta\right) \geq \frac{\sigma^2+\mu\left(\mu-\theta\right)}{1-\theta}$.
\item \emph{[Tchebysheff]} If $\mu>\theta$, then $\Prob\left(X>\theta\right) \geq 1- \left(\frac{\sigma}{\mu-\theta} \right)^2.$
\end{enumerate}
\end{prop}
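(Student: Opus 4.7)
The plan is to prove the three claims in order, with Bhatia--Davis and part~(2) being essentially routine and all of the real content concentrated in part~(1). For Bhatia--Davis, I would observe that $X\in[0,1]$ forces $X^2\leq X$ pointwise, hence $E(X^2)\leq \mu$, and subtracting $\mu^2$ from both sides gives $\sigma^2\leq \mu(1-\mu)$. For part~(2), I would apply two-sided Tchebysheff: assuming $\mu>\theta$, the event $\{X\leq\theta\}$ is contained in $\{|X-\mu|\geq \mu-\theta\}$, so $\Pr(X\leq\theta)\leq \sigma^2/(\mu-\theta)^2$, and taking complements yields the stated inequality.

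For part~(1), the key idea is to partition the sample space according to whether $X$ exceeds $\theta$ and use a sharper pointwise quadratic bound on each piece. Set $A=\{X>\theta\}$ and $B=\{X\leq\theta\}$. On $A$ only the crude bound $X\leq 1$ is available, giving $X^2\leq X$; on $B$ one gains $X^2\leq\theta X$. Summing the two pointwise estimates,
\[ E(X^2)\;\leq\; E(X\mathbf{1}_A) + \theta\, E(X\mathbf{1}_B) \;=\; (1-\theta)\,E(X\mathbf{1}_A) + \theta\mu, \]
which rearranges to $E(X\mathbf{1}_A)\geq \bigl(\sigma^2+\mu(\mu-\theta)\bigr)/(1-\theta)$. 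Since $X\leq 1$ on $A$, one also has $E(X\mathbf{1}_A)\leq \Pr(A)$, and chaining the two inequalities delivers the desired lower bound on $\Pr(X>\theta)$.

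The main obstacle is conceptual rather than computational: it is tempting to try to bound $\Pr(A)$ directly via a Markov-type estimate on $X$ or on $(X-\theta)_+$, but those approaches give weaker constants and do not involve $\sigma^2$ in the right way. The correct move is to introduce the auxiliary quantity $E(X\mathbf{1}_A)$, use the region-specific quadratic estimates to control $E(X^2)$ from above \emph{linearly} in it, and only then pass to $\Pr(A)$ via the trivial bound $X\leq 1$ on $A$. Once that structure is in place, the remainder is pure algebra.
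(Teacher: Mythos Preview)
Your argument is correct in all three parts, and your proof of part~(1) is genuinely different from the paper's.  The paper proceeds variationally: it sets $F$ equal to the right-hand side minus $\Pr(X>\theta)$, approximates $X$ by a finite data set $\{a_1,\dots,a_n\}$, computes $\partial F/\partial a_1$, and shows that $F$ is maximized when every data point lies in $\{0,\theta,1\}$, in which case a direct calculation gives $F=0$.  A limiting argument then handles general $X$.  Your route is a direct pointwise estimate: the region-dependent quadratic bounds $X^2\leq X$ on $\{X>\theta\}$ and $X^2\leq\theta X$ on $\{X\leq\theta\}$, combined with $E(X\mathbf{1}_A)\leq\Pr(A)$, give the inequality in one stroke with no optimization, no derivatives, and no approximation by finite samples.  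Your approach is shorter and more elementary; the paper's approach, by contrast, makes the extremal distribution (support in $\{0,\theta,1\}$) explicit as a by-product, though one can also read off the equality case from your argument by tracking when each pointwise bound is tight.
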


The Bhatia-Davis inequality from ~\cite{BD} says more generally for a random variable with support in $[m,M]$ that $\sigma^2\leq(\mu-m)(M-\mu)$.  Part (2) of Proposition~\ref{P:BD} is a straightforward application of Tchebysheff's Inequality.  Part (1)  is related to the Bhatia-Davis inequality as follows: if $\sigma^2> \mu(\theta-\mu)$, then the Bhatia-Davis inequality implies that $X$ could not have support in $[0,\theta]$, and indeed this is exactly the cutoff after which part (1) concludes that $\Prob(X>\theta)>0$.

\begin{proof}[Proof of part (1) of Proposition~\ref{P:BD}] Define
$$F = \frac{\sigma^2+\mu\left(\mu-\theta\right)}{1-\theta}-\Prob\left(X>\theta\right).$$
We wish to prove that $F\leq 0$.  First observe that if the support of $X$ is in $\left\{0,\theta,1\right\}$, then the probability function for $X$ has the form of Table~\ref{T:data}, where $a=\Prob(X=1)=\Prob\left(X>\theta \right)$.  In this case, writing $\sigma^2$ in terms of $a$ and then solving for $a$ yields: $a=\frac{\mu^2 -\theta\mu + \sigma^2}{1-\theta}$, which means $F=0$.  So to prove that $F\leq 0$, we must demonstrate that this is the extreme case that maximizes $F$.

Assume for now that $X$ has finite support $\{a_1,...,a_n\}\subset[0,1]$, and let $p_i$ denote the probability of $a_i$, so $\mu=\sum p_i a_i$ and $\sigma^2 = \sum p_i(\mu-a_i)^2$. Notice that $\frac{\partial \mu}{\partial a_1} = p_1$.  Assume that $a_1\notin \left\{0,\theta,1\right\}$.  The following computation is similar to the first proof of~\cite{BD}:
\begin{align*}
\frac{\partial F}{\partial a_1}
 &= \frac{1}{(1-\theta)}\cdot \left(2p_1(a_1-\mu)-2p_1\sum p_i(a_i-\mu)\right) + \frac{p_1}{(1-\theta)}\cdot(2\mu-\theta) \\
 & = \frac{2p_1}{(1-\theta)}\left(a_1-\frac{\theta}{2}\right).
\end{align*}

There are three cases to consider.  If $a_1\in\left(0,\frac{\theta}{2}\right)$, then moving $a_1$ left towards $0$ increases $F$.  If $a_1\in\left(\frac{\theta}{2},\theta\right)$, then moving $a_1$ right towards $\theta$ increases $F$.  If $a_1\in\left(\theta,1\right)$, then moving $a_1$ right towards $1$ increases $F$.  From this, it follows that $F$ is maximized when the support of $X$ is $\left\{0,\theta,1\right\}$, as desired.

The case where $X$ does \emph{not} have finite support can be obtained by a standard limit approximation argument.
\end{proof}

\begin{table}
\begin{tabular}{ll}
\hline
$x$&\vline \,\,$\Prob(X=x)$   \\ \hline
$0$&\vline \,\,$1-a-\frac{1}{\theta}(\mu-a)$ \\
$\theta$&\vline \,\,$\frac{1}{\theta}(\mu-a)$ \\
$1$&\vline \,\,$a$           \\ \hline
\end{tabular}\caption{The extreme case: the support of $F$ is in $\left\{0,\theta,1\right\}$.}
\label{T:data}\end{table}

According to the Bhatia-Davis inequality, the value $\K=\frac{\sigma^2}{\mu(1-\mu)}\in(0,1)$ measures the size of the variance relative to the maximal variance possible for the given value of $\mu$.  Proposition~\ref{P:BD} can be easily rephrased in terms of $\K$ (rather than $\sigma$) as follows:
\begin{cor}\label{C:BD} Suppose that $X$ is a random variable with support in $[0,1]$, and $\theta\in(0,1)$ is a constant.  Denote $\mu=E(X)$ and $\sigma^2=\text{Var}(X)$.  Assume $\mu\notin\{0,1\}$ and define $\K=\frac{\sigma^2}{\mu(1-\mu)}\in(0,1)$.  Then
\begin{enumerate}
\item $\Prob\left(X>\theta\right) \geq \frac{1}{1-\theta}\left( \left(1-\K\right)\mu^2 + \left(\K-\theta\right)\mu\right)$.
\item \emph{(Tchebysheff)} If $\mu>\theta$, then $\Prob\left(X>\theta\right) \geq 1-\K\cdot \frac{\mu(1-\mu)}{\left(\mu-\theta\right)^2}$.
\end{enumerate}
\end{cor}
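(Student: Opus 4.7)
The plan is straightforward: Corollary~\ref{C:BD} is a direct algebraic rephrasing of Proposition~\ref{P:BD}, obtained by substituting the identity $\sigma^2 = \K \cdot \mu(1-\mu)$ that comes from the definition of $\K$. Since $\mu\notin\{0,1\}$ guarantees that $\mu(1-\mu)\neq 0$, the substitution is legal, and the Bhatia--Davis inequality guarantees $\K\in[0,1]$ (with $\K\in(0,1)$ if we also assume non-degenerate support), matching the claim about the range of $\K$.

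For part (2), I would start from the Tchebysheff conclusion $\Pr(X>\theta)\geq 1 - \sigma^2/(\mu-\theta)^2$ in Proposition~\ref{P:BD}(2) and simply replace $\sigma^2$ with $\K\mu(1-\mu)$ to get the stated bound. For part (1), I would start from $\Pr(X>\theta)\geq (\sigma^2 + \mu(\mu-\theta))/(1-\theta)$ in Proposition~\ref{P:BD}(1), substitute $\sigma^2 = \K\mu(1-\mu)$, factor $\mu$ out of the numerator, and regroup terms as $(1-\K)\mu^2 + (\K-\theta)\mu$. This is purely a one- or two-line calculation in each case.

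The hypothesis $\mu\notin\{0,1\}$ is needed only to ensure that $\K$ is well-defined; once $\K$ exists, both inequalities of Proposition~\ref{P:BD} transfer verbatim. In particular, there is no need to revisit the extremal argument on the support $\{0,\theta,1\}$ used in the proof of Proposition~\ref{P:BD}(1), nor Tchebysheff's inequality underlying Proposition~\ref{P:BD}(2); both results are simply being rescaled.

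The main (and really only) obstacle is bookkeeping in the algebra for part (1): making sure the factor of $(1-\theta)$ in the denominator is preserved correctly and that the regrouping $\K\mu(1-\mu) + \mu(\mu-\theta) = (1-\K)\mu^2 + (\K-\theta)\mu$ is carried out without sign errors. There is no genuine mathematical content beyond what Proposition~\ref{P:BD} already supplies, so the proof can be written as a short two-step computation with a single sentence of justification.
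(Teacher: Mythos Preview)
Your proposal is correct and matches the paper's approach exactly: the paper states that Proposition~\ref{P:BD} ``can be easily rephrased in terms of $\K$ (rather than $\sigma$)'' and gives no further argument, so the substitution $\sigma^2=\K\mu(1-\mu)$ followed by the regrouping $\K\mu(1-\mu)+\mu(\mu-\theta)=(1-\K)\mu^2+(\K-\theta)\mu$ is precisely what is intended.
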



\bibliographystyle{amsplain}

\end{document}